\setlist[itemize]{noitemsep}
\newtheorem{theorem}{Theorem}
\newtheorem*{remark}{Remark}
\newtheorem{proposition}[theorem]{Proposition}
\newtheorem{definition}[theorem]{Definition}
\newcommand{\TMD}{\mathrm{TMD}}
\newcommand{\TMDg}{\mathrm{TMDg}}
\newcommand{\diag}{\mathrm{diag}}
\renewcommand{\epsilon}{\varepsilon}
\title{Stability of topological descriptors for neuronal morphology}
\author[1]{David Beers\thanks{Email: beers@maths.ox.ac.uk}}
\author[1,2]{Heather A. Harrington}
\author[1]{Alain Goriely}
\affil[1]{Mathematical Institute, University of Oxford}
\affil[2]{Wellcome Centre for Human Genetics,  University of Oxford}
\date{}
\begin{document}
\maketitle

\begin{abstract}
The topological morphology descriptor of a neuron is a multiset of intervals associated to the shape of the neuron represented as a tree. In practice, topological morphology descriptors are vectorized using persistence images, which can help classify and characterize the morphology of broad groups of neurons. We study the stability of topological morphology descriptors under small changes to neuronal morphology.  We show that the persistence diagram arising from the topological morphology descriptor of a neuron is stable for the 1-Wasserstein distance against a range of perturbations to the tree. These results guarantee that persistence images of topological morphology descriptors are stable against the same set of perturbations and reliable.
\end{abstract}

\section{Introduction}
Topological data analysis is a field in applied mathematics concerned with the shape of data. Arising from this field, the celebrated persistent homology (PH) algorithm  \cite{zomorodian2005computing} records relevant information regarding geometric data. In certain settings, PH sidesteps alignment issues of other geometric descriptors. The persistent homology algorithm takes as input a topological space $X$ equipped with a function $f$, and returns a multiset of intervals called a barcode, or equivalently, a multiset of points in the plane called a persistence diagram. Roughly speaking, the $x$ and $y$ coordinates of each point in a persistence diagram detail when a particular topological feature appears and then subsequently disappears in the sequence of shapes $f(-\infty,t]$ as $t$ varies. The space of persistence diagrams can be equipped with the bottleneck and 1-Wasserstein distances, defined in Section \ref{sec:metrics}. Loosely speaking, the bottleneck distance between two persistence diagrams is the smallest number $\epsilon$ that exists allowing a pairing between points in each diagram moving points no more than $\epsilon$, whereas for the 1-Wasserstein distance $\epsilon$ controls the cumulative distance the points are moved in a matching. Crucially, persistence diagrams arising from PH are stable against certain perturbations of $f$ with respect to these metrics \cite{bubenik2014categorification, cohen2005stability, skraba2020wasserstein}. Persistence diagrams can be vectorized for machine learning \cite{adams2017persistence,bubenik2015statistical}. One such vectorization, persistence images, is stable under perturbations to the 1-Wasserstein distance \cite{adams2017persistence}.

Ideas from topological data analysis have been developed to study neurons. Neuronal shape plays a key role in cognition \cite{london2005dendritic}; therefore, the morphology of neurons is often analysed and classified. Mathematicians have proposed  specialized topological descriptors for neuronal morphology. Li et al. \cite{li2017metrics} showed that PH can be directly applied to neuronal data to encode morphologically relevant information in persistence diagrams. Kanari et al. \cite{kanari2018topological} proposed the topological morphology descriptor (TMD), a barcode derived from neuronal data using a PH-inspired algorithm. This TMD barcode is derived from a topological space $X$ equipped with a function $f$, but here $X$ is required to be a rooted tree $T$. The present authors proved in \cite{beers2022barcodes} that when $f$ increases along paths away from the root, this methodology coincides with that of \cite{li2017metrics}. Kanari et al. \cite{kanari2018topological} showed that the TMD is stable against perturbations of neuronal data for the bottleneck distance. The TMDs of different neurons can be vectorized using persistence images, which can be averaged, interpreted and compared. 
Khalil et al. \cite{khalil2022topological} state the TMD is stable for the 1-Wasserstein distance against perturbations to an
embedding of $T$ in Euclidean space when $f$ the Euclidean distance of the
vertices to the soma; no quantification or bounds are given measuring the stability of these perturbations.

\begin{figure}[htbp]
\centering
\resizebox{.8\textwidth}{!}{
\includegraphics{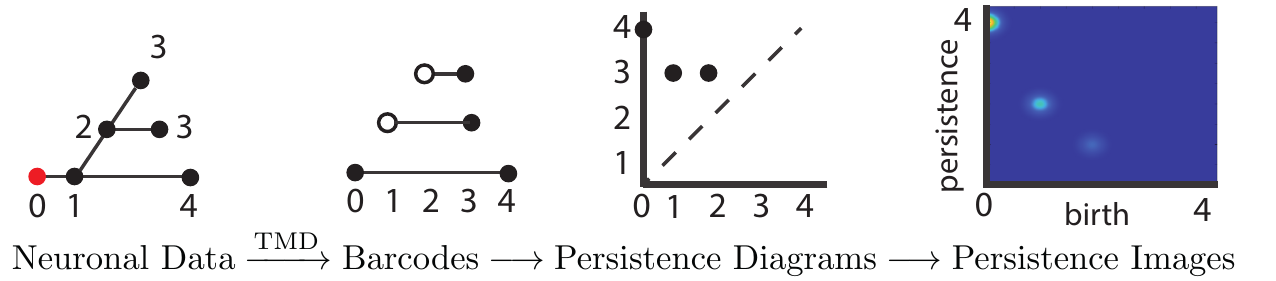}
}
\caption{The persistence image pipeline applied to neuronal data.}
\label{fig:PIpipeline}
\end{figure}

Here, we prove 1-Wasserstein stability of the TMD against perturbations to any function $f$ and to a class of perturbations to the structure of $T$. We provide Lipschitz bounds for both kinds of stability. In other words, we show that the composite of the first two steps in the Figure \ref{fig:PIpipeline} pipeline of converting neuronal data to persistence images is stable in greater generality. As a direct consequence of the work of Adams et al. \cite{adams2017persistence}, the entire pipeline of Figure \ref{fig:PIpipeline} is stable against the perturbations we consider.
The key to our main stability result is the identification of classes of perturbation that are stable for the 1-Wasserstein distance. Roughly speaking, these perturbations correspond to the following perturbations to a tree:
\begin{itemize}
    \item moving slightly the positions of the neuron's branches;
    \item adding a short branch;
    \item deleting a short branch.
\end{itemize}
The rigorous definition of these perturbations will be given in Section \ref{sec:results}. The main result of this paper, Theorem \ref{thm:main}, is that the TMD is stable against these perturbations for the 1-Wasserstein distance. Therefore, persistence images are stable against the same perturbations.

\section{Background}
\subsection{The topological morphology descriptor (TMD)}
Let $T$ denote a finite rooted tree with root $r$. In this paper we say a non-root vertex $l\in N(T)$ is a \textit{leaf} if it is a vertex of degree one. The root is only called a leaf if it has degree zero. A vertex in $T$ is called a \textit{branch point} if it has degree three or greater. The root, again an exception, we call a branch point if it has degree two or greater. We denote by $N(T)$, $E(T)$, and $L(T)$ the sets of vertices, edges, and leaves of $T$ respectively. We define the \textit{depth} of a vertex $v\in N(T)$ to be the number of nodes, minus one, along the shortest path from $r$ to $v$. The \textit{depth} of a rooted tree $T$ is the greatest depth of any of its vertices. Consider a real valued function $f$ on $N(T)$. The \textit{topological morphology descriptor} (TMD) of $T$ and $f$, denoted $\TMD(T,f)$, is a multiset of intervals called a barcode obtained via the algorithm developed in \cite{kanari2018topological}. If $T$ is a tree representing a neuron with its root representing the soma and $f$ records a notion of distance of the vertices of $T$ to the root, then $\TMD(T,f)$ is interpreted as a decomposition of the underlying neuron into its branches, with each interval recording how far a branch's initial and terminal points are from the root.

For any vertex $v$ of $T$ with child $v'$ we call the rooted tree generated by $v'$, its children, its children's children, and so on, along with the edge $(v,v')$, a \textit{child branch} of $v$. The process of obtaining a barcode from $T$ and $f$ is as follows:
\begin{enumerate}
    \item Choose any branch point $b$ in $T$;
    \item Identify one of the children $c$ of $b$ whose child branch maximizes $f$ on the vertices that are leaves of $T$ descendant from $b$;
    \item Detach all child branches for children $c'\neq c$ of $b$. Replace $T$ with the resulting forest;
    \item If there are no branch points in the resulting forest, then $T$ is a collection of intervals, and we are done. Otherwise return to step 1.
\end{enumerate}
What will remain at the end of this procedure is a multiset of intervals with right endpoints in bijective correspondence with the leaves of $T$. We assign numbers to the endpoints of these intervals via $f$ and call the resulting structure $\TMD(T,f)$. We refer to any collection of intervals with endpoints labelled by real numbers as a \textit{barcode}. Figure~\ref{fig:TMDalg} shows an example of a barcode computed via the TMD. An efficient algorithm for computing the barcode $\TMD(T,f)$ is given in \cite{kanari2018topological}. Note that in general it may be the case that the number assigned by $f$ to the left endpoint of an interval may be greater than that assigned to the right endpoint of an interval. However, this never happens if $f$ is increasing along paths away from the root, as a consequence of \cite[Theorem 2]{beers2022barcodes}. We then define $\TMDg(T,f)$ to be the disjoint union of a copy $(a,b)\in\mathbb{R}^2$ for every interval in $\TMD(T,f)$ with left endpoint $a$ and right endpoint $b$, along with infinite copies of $(x,x)\in\mathbb{R}^2$ for every real number $x$\footnote{In \cite{kanari2018topological}, the authors use reversed notation, where an interval with left endpoint $a$ and right endpoint $b$ gets mapped to $(b,a)$. This is likely done to suggest a connection between the TMD and extended persistence of the superlevel set filtration of $f$.}. By convention, we refer to the value $a$ the \textit{birth} of a feature represented by a point $(a,b)$, and we refer to $b$ and $b-a$ as the \textit{death} and \textit{persistence} of the same feature, respectively. Notice that $\TMDg(T,f)$ will always contain a point $(f(r), L)$, where $L$ is the maximum of $f$ on the leaves of $T$. For convenience, we denote the multiset of all diagonal points $(x,x)$, each with infinite multiplicity, as $\diag$.

\begin{figure}[htbp]
\centering
\resizebox{\textwidth}{!}{
\includegraphics{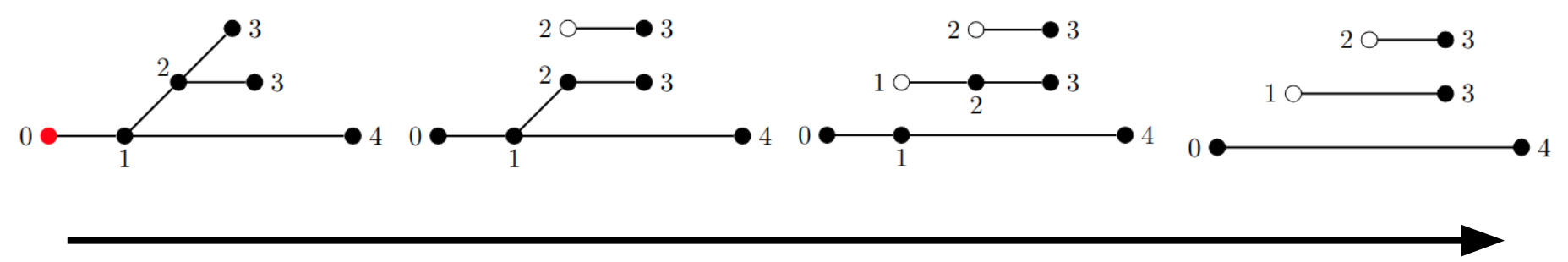}
}
\caption{Computing the TMD from neuronal data.}
\label{fig:TMDalg}
\end{figure}

\subsection{Metrics on the space of persistence diagrams}
\label{sec:metrics}

The multiset $\TMDg(T,f)$ is an example of what is called a \textit{persistence diagram}, a multiset of $\mathbb{R}^2$ containing $\diag$ and only finitely many off-diagonal points. There are many metrics on the space of persistence diagrams. In this paper we consider two of particular interest.

\begin{definition}
A map $\phi$ between persistence diagrams $D_1$ and $D_2$ is called a matching if it is bijective. The bottleneck distance between $D_1$ and $D_2$ is defined to be
\begin{equation*}
    d_B(D_1,D_2) := \inf_\phi \sup_{x \in D_1} \|x - \phi(x)\|_\infty,
\end{equation*}
where the infimum is taken over all matchings $\phi$ between $D_1$ and $D_2$. The 1-Wasserstein distance is defined similarly, as
\begin{equation*}
    d_1(D_1,D_2) := \inf_\phi \sum_{x\in D_1} \|x-\phi(x)\|_\infty,
\end{equation*}
again with the infimum being over matchings $\phi$ between $D_1$ and $D_2$.

If for a matching $\phi$ we have
\begin{equation*}
    \sum_{x\in D_1} \|x-\phi(x)\|_\infty \leq \epsilon,
\end{equation*}
we will say that $\phi$ is an $\epsilon$-matching\footnote{It is more common in the literature for an $\epsilon$-matching to refer to a matching such that $\sup_{x\in D_1} \|x-\phi(x)\|_\infty \leq \epsilon$. We use an alternate definition in this work as the 1-Wasserstein distance is of primary interest here.}.
\end{definition}
Edelsbrunner and Harer showed \cite[Section 8]{edelsbrunner2022computational} that $d_B$ and $d_1$ are indeed metrics on the space of persistence diagrams. By mapping $\TMD(T,f)$ to $\TMDg(T,f)$, we attain a pseudometric on the space of TMDs from either of these metrics on the space of persistence diagrams. Explicitly, we define
\begin{equation*}
\begin{split}
    d_B(\TMD(T,f), \TMD(T',f')) &: = d_B(\TMDg(T,f), \TMDg(T',f')),\\
    d_1(\TMD(T,f), \TMD(T',f')) &: = d_1(\TMDg(T,f), \TMDg(T',f')).
\end{split}
\end{equation*}

Stability results for the TMD algorithm with respect to the bottleneck distance were given by Kanari et al. \cite[SI, Section 4]{kanari2018topological}.
Khalil et al. \cite{khalil2022topological} state the TMD is stable against perturbations to an embedding of $T$ in Euclidean space for the 1-Wasserstein distance when $f$ is the Euclidean distance of the vertices to the soma; however Lipschitz bounds for this kind of stability are not provided. We prove in Section~\ref{sec:results} that the TMD is stable against perturbations of arbitrary functional information $f$, and against certain perturbations of the underlying rooted tree $T$. Before we prove these results we motivate the importance of 1-Wasserstein stability in the next subsection.

\subsection{Persistence images}
For data science problems, persistence diagrams are often mapped into vector spaces, which allows them to be analyzed via machine learning \cite{bubenik2015statistical, carriere2015stable, robins2016principal}. One popular technique of this style is to transform persistence diagrams into matrices called \textit{persistence images} introduced by Adams et al \cite{adams2017persistence}. Persistence images are of particular interest in the setting of the TMD as they are the standard type of vectorization used in this context \cite{kanari2018topological,kanari2019objective,laturnus2020systematic} in part because they provide interpretable visual summaries of averaged persistence diagrams.

Obtaining a persistence image from a persistence diagram is a process involving three steps. The first step is to transform a persistence diagram $D$ to another multiset of the plane $\tilde{D}$ by mapping each point $(x,y)$ to $(x,y-x)$. Next we fix a positive number $\sigma$ and define a function $f_\sigma$ to be the sum of 2D Gaussian functions of standard deviation $\sigma$ centered over each point in $\tilde{D}$, and weighted by their distance from the $x$-axis. The function $f_\sigma$ is called a \textit{persistence surface}. A persistence image can then be defined by fixing a grid over a relevant range of values and generating a matrix where the $(i,j)^\mathrm{th}$ value is the integral of $f_\sigma$ over the $(i,j)^\mathrm{th}$ cell. If the dimensions of each cell in the grid are small relative to $\sigma$, a persistence image can be approximated by sampling over a grid of points. Persistence images can be viewed as vectors with a coordinate for each cell in their corresponding grid. Viewing persistence images as vectors, they are stable in their 1, 2, and $\infty$ norms with respect to the 1-Wasserstein distance of persistence diagrams \cite[Theorem 10]{adams2017persistence}\footnote{The observant reader may notice that the Lipschitz bound provided in the cited theorem is dependent on the $\infty$-norm of our weighting function. This may cause some alarm since we suggest weighting by distance to the $x$-axis, an unbounded function. However, going through the proof one observes that the bound is only dependent on the $\infty$-norm of the weight function over points in a fixed dataset, which is always bounded, provided the dataset is finite and each diagram has finitely many nontrivial points. Furthermore, the weighting the authors themselves use in \cite[page 8]{adams2017persistence} differs from the distance from the $x$-axis function by a constant multiple on data.}.

The first transformation step $(x,y) \mapsto (x, y-x)$ can be removed from the construction of persistence images without disrupting stability guarantees, provided we instead apply a weight of $|y-x|$ to Gaussians centered at $(x,y)$ when constructing persistence surfaces. Indeed, following the proofs of Theorems 9 and 10 from \cite{adams2017persistence}, it can be observed that the same reasoning holds. In fact, the bounds we are guaranteed are tighter by a factor of $\sqrt{5}$, thanks to no longer needing to transform coordinates in the proof of Theorem 9. This can be a slightly convenient alternate version of persistence images in practice, since in certain settings $\TMDg(T,f)$ can be a persistence diagram with points in both regions $y<x$ and $y>x$. Similarly, we can also reflect points in a persistence diagram without the transformation $(x,y)\mapsto(x,y-x)$ about the diagonal $y=x$, or with the transformation about the $x$-axis while still ensuring persistence images are 1-Wasserstein stable. We mention this since Kanari et al. have already used unweighted analogues of reflected persistence images without the transformation $(x,y)\mapsto(x,y-x)$ in their previous work on the TMD \cite{kanari2018topological,kanari2019objective}. In summary, any transformation of a tree equipped with a function that is stable for the 1-Wassertein distance of its TMD is also stable with respect to popular metrics on its associated transformed and untransformed, reflected and unreflected persistence images. This fact motivates the following section.

\section{1-Wasserstein Stability}
\label{sec:results}
In this section we prove the TMD is stable to four types of perturbations. We build on an inductive argument of \cite[SI]{kanari2018topological}. Denote \textbf{P1}, \textbf{P2}, \textbf{P3}, and \textbf{P4} the four kinds of perturbations to data from $(T,f)$ to $(T',f')$:

\begin{itemize}
    \item [\textbf{P1}] \textit{Vertex perturbation.} $T' = T$ and $f'$ satisfies $|f'(v)-f(v)|\leq \epsilon$ for all vertices $v$ in $N(T)$.
    \item [\textbf{P2}] \textit{Attaching an edge to a vertex.} $N(T') = N(T) \cup \{v'\}$, $E(T') = E(T) \cup \{(v,v')\}$ for some vertex $v\in N(T)$, $f' = f$ on $N(T)$, $|f'(v') - f(v)|\leq \epsilon$, and $f(v)\leq f(l) + \epsilon$ for at least one leaf $l$ of $T$ that is a descendant of $v$. Put another way, we attach a short edge to one of the vertices of $T$, one of whose descendant leaves does not have a much smaller $f$ value than its own.
    \item [\textbf{P3}] \textit{Attaching an edge to an edge.} There is an edge $(n,n')$ in $E(T)$, $N(T') = N(T) \cup \{v,v'\}$, $E(T') = E(T) \cup \{(n,v),(v,n'),(v,v')\} - \{(n,n')\}$, $f=f'$ on $N(T)$, $|f(v) - f(v')|\leq \epsilon$, and $f'(v)\leq f(l) + \epsilon$ for at least one leaf $l$ of $T$ that is a descendant of $v$. In words, we add a vertex interior to an edge of $T$, ensure that one of the descendant leaves of this vertex does not have a much smaller $f$ value than its own, and attach a short edge to this vertex.
    \item [\textbf{P4}] \textit{Edge retraction.} $(T,f)$ is obtained by a perturbation of type \textbf{P2} or \textbf{P3} from $(T',f')$.
\end{itemize}
We leave the root vertex unchanged in all of these classes of perturbation.

We show visualizations of these transformations in Figure \ref{fig:trans}, along with a type of transformation we do not claim to be stable.

\begin{figure}[htbp]
\centering
\resizebox{\textwidth}{!}{
\includegraphics{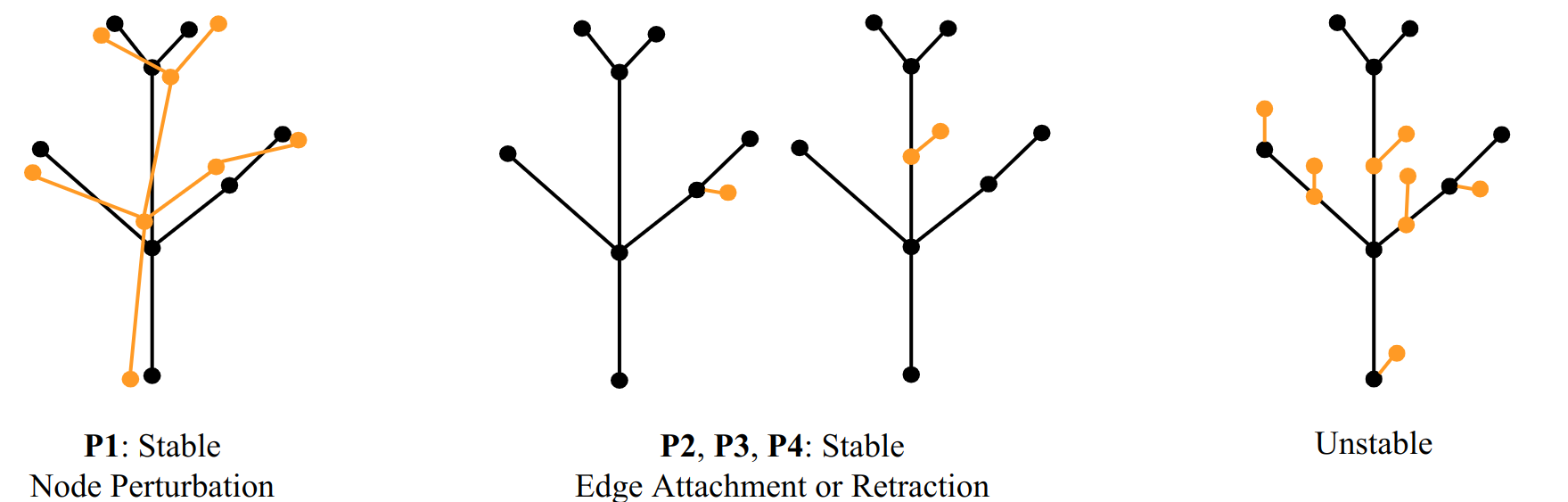}
}

\caption{Tree perturbations we show are stable, and one type of perturbation that is unstable.}
\label{fig:trans}
\end{figure}

Now we state the main theorem of the paper.

\begin{theorem}
Let $T$ be a rooted tree and $f:T \to \mathbb{R}$. Let also $T'$ be a rooted tree with $f':T' \to \mathbb{R}$.
\begin{enumerate}
    \item If $(T',f')$ is obtained by a perturbation of type \textbf{P1} of $(T,f)$, then
    \begin{equation*}
        d_1(\TMD(T,f),\TMD(T',f')) \leq |L(T)|\epsilon.
    \end{equation*}
    
    \item If $(T',f')$ is obtained by a perturbation of type \textbf{P2} or \textbf{P3} of $(T,f)$, then
    \begin{equation*}
        d_1(\TMD(T,f),\TMD(T',f')) \leq \frac{5}{2}\epsilon.
    \end{equation*}
    Furthermore, if the vertex $v$ given by the perturbation of type \textbf{P2} or \textbf{P3} is such that $f'(v)\leq f(l)$ for some leaf $l$ of $T$ that is a descendant of $v$, then
    \begin{equation*}
        d_1(\TMD(T,f),\TMD(T',f')) \leq \epsilon.
    \end{equation*}
    
    \item If $(T',f')$ is obtained by a perturbation of type \textbf{P4} of $(T,f)$, then
    \begin{equation*}
        d_1(\TMD(T,f),\TMD(T',f')) \leq \frac{5}{2}\epsilon.
    \end{equation*}
    Furthermore, if the vertex $v$ in $T$ given by the perturbation of type \textbf{P2} or \textbf{P3} from $(T',f')$ to $(T,f)$ is such that $f(v)\leq f'(l)$ for some leaf $l$ of $T'$ that is a descendant of $v$, then
    \begin{equation*}
        d_1(\TMD(T,f),\TMD(T',f')) \leq \epsilon.
    \end{equation*}
\end{enumerate}
\label{thm:main}
\end{theorem}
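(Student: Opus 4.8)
The plan is to exploit the explicit combinatorial structure of the barcode: the right endpoints (deaths) of $\TMD(T,f)$ are exactly the values $\{f(l):l\in L(T)\}$, while the left endpoint (birth) of the bar associated to a leaf $l$ is $f(b)$, where $b$ is the deepest ancestral branch point of $l$ at which $l$'s branch fails to be the $f$-maximizing child (or $f(r)$ if no such branch point exists). With this dictionary I would reduce the whole theorem to two auxiliary lemmas together with the symmetry of $d_1$. The anticipated main obstacle, which both lemmas must confront, is that the greedy Step~2 of the algorithm is \emph{unstable as a combinatorial choice}: an arbitrarily small change to $f$ (or the insertion of $v'$) can flip which child maximizes $f$ at a branch point, rerouting the pairing between leaves and births. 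The quantitative fact that rescues stability is that two \emph{competing} sub-branch maxima can only flip when they are within $\epsilon$ of each other, so a flip never moves a matched pair by more than $\epsilon$.

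For Part~1 (and as the engine for the rest) I would prove: if $T'=T$ and $\|f-f'\|_\infty\le\epsilon$, there is a matching of $\TMDg(T,f)$ with $\TMDg(T,f')$ pairing every off-diagonal point to an off-diagonal point and moving each by at most $\epsilon$ in $\|\cdot\|_\infty$; summing over the $|L(T)|$ off-diagonal points gives $d_1\le|L(T)|\epsilon$. The proof is by induction on the number of branch points. Pick a deepest branch point $b$, whose child branches are paths to leaves $l_1,\dots,l_k$; let $l_1$ maximize $f$ and $l_j$ maximize $f'$. I collapse the fan below $b$ to a single leaf at $b$, giving a reduced tree $T_b$ with one fewer branch point and effective values $\bar f(b)=\max_i f(l_i)$, $\bar f'(b)=\max_i f'(l_i)$; since $\max$ is $1$-Lipschitz, $\|\bar f-\bar f'\|_\infty\le\epsilon$, so the induction hypothesis applies to $(T_b,\bar f),(T_b,\bar f')$. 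The detached bars $\{(f(b),f(l_i)):i\ne1\}$ and $\{(f'(b),f'(l_i)):i\ne j\}$ are matched by pairing the common indices directly, and pairing $T$'s surplus bar $(f(b),f(l_j))$ with $T'$'s surplus bar $(f'(b),f'(l_1))$; the estimate $|f(l_j)-f'(l_1)|\le\epsilon$ (both lie in $[\,f'(l_1)-\epsilon,\;f(l_1)+\epsilon\,]$ because $l_1,l_j$ are the respective maxima) shows this costs at most $\epsilon$. This is exactly the rerouting that tames the unstable greedy choice.

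The second lemma isolates a \emph{single-leaf} change: if $f,f'$ differ only at one leaf $w$, by $\delta=|f(w)-f'(w)|$, then $d_1\le\delta$ (not $|L|\delta$). I would prove this by linearly interpolating the value at $w$: between combinatorial crossings the matching is leaf-to-leaf and only $w$'s bar moves, at unit speed in its death coordinate, while at each crossing the diagram is continuous because a tie forces the two swapped bars to be an identical pair (the branch arriving at the contested branch point carries the same value either way), so no jump occurs. Integrating the unit speed over a total value-change of $\delta$ yields the bound. Part~2 then follows by collapsing the subtree below $v$ to an effective leaf carrying the surviving maximum: the only bars that change are (i) at most one newly detached, near-diagonal bar $(f(v),M)$ with $|M-f(v)|\le\epsilon$, hence at $\|\cdot\|_\infty$-distance $\le\epsilon/2$ from $\diag$, and (ii) a single-leaf change of the reduced tree by $\delta=|M-f'(v')|$, where $M$ is the maximal original descendant-leaf value. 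The hypothesis $f(v)\le f(l)+\epsilon$ gives $M\ge f(v)-\epsilon$, whence $\delta\le2\epsilon$ and the total is $\le\epsilon/2+2\epsilon=\tfrac52\epsilon$; under the extra hypothesis $M\ge f(v)$ one gets $\delta=a+\!\!$ where writing $a=M-f(v)\ge0$, $b=\delta$, $a+b\le\epsilon$, the cost $a/2+b\le\epsilon$, giving the sharp bound. Combining the sub-matchings on the disjoint-union decomposition (legitimate since $\diag$ has infinite multiplicity) produces the global matching.

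Finally, \textbf{P3} reduces to \textbf{P2} after first subdividing the edge $(n,n')$ at $v$: inserting a degree-two vertex changes neither any leaf value nor any branch-point value, so it leaves $\TMD$ unchanged, and the resulting attachment of $v'$ to $v$ is exactly a \textbf{P2} move with parent value $f'(v)$. Part~3 (\textbf{P4}) is then immediate from Part~2 and the symmetry $d_1(\TMD(T,f),\TMD(T',f'))=d_1(\TMD(T',f'),\TMD(T,f))$, the two displayed hypotheses in Part~3 being precisely the Part~2 hypotheses for the reverse perturbation. The hard part throughout is the bookkeeping forced by the unstable maximizing choice; once the $\epsilon$-closeness of competing maxima and the continuity-across-ties are in hand, the constants $\tfrac52$ and the refinement to $1$ fall out of the elementary estimates above.
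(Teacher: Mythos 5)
Your proposal is correct and reaches the paper's exact constants, but the route for statement 2 is genuinely different from ours. For statement 1 your induction is essentially a bottom-up dual of our top-down one: you collapse the fan at a deepest branch point and re-pair the two ``surplus'' bars created when the maximizing child flips, whereas we decompose $\TMDg(T,f)$ at the root into the diagrams of the child subtrees and repair the matching there; the load-bearing estimate (two competing sub-branch maxima can only swap roles when they are within $\epsilon$ of each other) is the same in both. The real divergence is your single-leaf lemma: changing $f$ at one leaf by $\delta$ costs at most $\delta$ in $d_1$, proved by linearly interpolating that value and observing that the diagram path is piecewise linear with exactly one point moving at unit sup-norm speed, and is continuous at combinatorial crossings because the two swapped bars coincide as a multiset at the moment of a tie. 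Granting that lemma, your decomposition of a \textbf{P2} move into (i) one new bar of persistence at most $\max(\epsilon,\delta)$ matched to $\diag$ at half that cost and (ii) a single-leaf change of size $|f'(v')-M|$ in the tree with the subtree below $v$ collapsed recovers precisely our bound $\epsilon+\tfrac{3}{2}\delta$ from Proposition \ref{prop:addbranch}, including the refinement to $\epsilon$ when $M\geq f(v)$. Our proof instead inducts on the depth of $v$, performs explicit matching surgery to maintain the hypothesis that the globally maximal bars are paired, and handles ties by perturbing to an injective $f'$; your interpolation argument absorbs both of those complications at once and is arguably cleaner. The two places your sketch leans on unproved claims are the multiset identity $\TMDg(T,f)=\TMDg(T_{\mathrm{red}},\bar f)\sqcup D_{\mathrm{below}}$ for the collapsed tree and the continuity-at-ties assertion; both are true and straightforward to verify from the algorithm, but the latter in particular is the crux of the single-leaf lemma and should be written out (checking that the bar dying at the interpolated value and the bar dying at the competitor's maximum exchange birth coordinates through a common limit). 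Your reductions of \textbf{P3} to \textbf{P2} via edge subdivision and of \textbf{P4} to statement 2 via symmetry of $d_1$ match ours (Proposition \ref{prop:intnodes} and the remark following Theorem \ref{thm:main}).
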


Note that the 1-Wasserstein stability results are weaker than those obtained for the bottleneck distance in \cite[SI, Theorem 1]{kanari2018topological}. First, the Lipschitz constant we provide in statement 1 of the theorem is dependent on the number of leaves in $T$. Second, for statements 2 and 3, we are not allowing for arbitrarily intervals of a fixed length to be attached to or removed from $T$. With respect to the 1-Wasserstein distance, it is not hard to see that this kind of perturbation is unstable. For instance, if $T$ is a single root vertex equipped with the zero map $f$, and we obtain $(T',f')$ by attaching $n$ vertices to this root, each with $f'$ value $\epsilon$, we can make the 1-Wasserstein distance between the two associated TMDs arbitrarily large by increasing $n$. 

We will first prove statement 1 of this theorem, then statement 2. Statement 3 follows immediately from statement 2, since these statements are identical with the roles of $(T,f)$ and $(T',f')$ reversed. Our methods of proof are based on those of \cite[SI, Section 4]{kanari2018topological}. The following proposition is equivalent to statement 1.

\begin{proposition}
Let $T$ be a rooted tree with root $r$ and $f,f':N(T) \to \mathbb{R}$ such that $|f(v)-f'(v)|\leq \epsilon$ for all vertices $v$ in $N(T)$. Then
\begin{equation*}
    d_1(\TMDg(T,f),\TMDg(T,f')) \leq |L(T)|\epsilon.
\end{equation*}
\label{prop:fpurt}
\end{proposition}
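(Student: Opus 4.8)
The plan is to exploit the combinatorial description of the TMD and reduce to the case in which the two barcodes have the same combinatorial structure, which is then settled by a trivial matching. Recall from the algorithm of Section~2.1 that each leaf $\ell\in L(T)$ produces exactly one interval, whose right endpoint (death) is $f(\ell)$; the $|L(T)|$ intervals are in bijection with the leaves via these right endpoints. The left endpoint (birth) of the interval for $\ell$ is $f(\beta(\ell))$, where $\beta(\ell)$ is the ancestor of $\ell$ (a branch point or the root) at which the child branch containing $\ell$ is detached. The map $\beta: L(T)\to N(T)$ is determined \emph{purely combinatorially}, by the choices in step~2 of the algorithm, i.e.\ by which descendant leaf realizes $\max\{f(\ell')\}$ at each branch point; hence $\beta$ depends on $f$ only through the relative order of the values of $f$ on the leaves.

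If $f$ and $f'$ induce the same map $\beta$, the bound is immediate: match the interval of $\ell$ in $\TMDg(T,f)$ to the interval of $\ell$ in $\TMDg(T,f')$, that is $(f(\beta(\ell)),f(\ell))\mapsto(f'(\beta(\ell)),f'(\ell))$. Each such point moves at most $\max(|f(\beta(\ell))-f'(\beta(\ell))|,\,|f(\ell)-f'(\ell)|)\le\epsilon$ in $\|\cdot\|_\infty$, and summing over the $|L(T)|$ intervals gives $d_1\le|L(T)|\epsilon$. The entire difficulty is therefore concentrated in the fact that the winner choices, and hence $\beta$, may differ between $f$ and $f'$.

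To deal with this I would interpolate linearly, setting $f_t:=(1-t)f+tf'$ for $t\in[0,1]$ and writing $D_t:=\TMDg(T,f_t)$. Since each $f_t(v)$ is affine in $t$, for any two leaves the sign of $f_t(\ell_1)-f_t(\ell_2)$ changes at most once; as the combinatorial type of the barcode (the argmax choices, hence $\beta$) can only change when such an order reverses, there are finitely many breakpoints $0=t_0<t_1<\cdots<t_m=1$ so that $\beta$ is constant on each open subinterval $(t_i,t_{i+1})$. On $[t_i,t_{i+1}]$ the previous paragraph applies with the common combinatorial type, and because every endpoint moves affinely with $|f_{t_{i+1}}(v)-f_{t_i}(v)|\le(t_{i+1}-t_i)\epsilon$, the identity matching yields $d_1(D_{t_i},D_{t_{i+1}})\le|L(T)|(t_{i+1}-t_i)\epsilon$. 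Summing over $i$ and using the triangle inequality for $d_1$ telescopes to $d_1(D_0,D_1)\le|L(T)|\epsilon$, as required.

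The step I expect to be the main obstacle is justifying that the telescoping is legitimate at the breakpoints $t_i$, i.e.\ that $D_t$ is genuinely continuous as $t$ crosses a value where the combinatorial type changes, so that the left- and right-hand diagrams at $t_i$ coincide and the triangle inequality can be invoked. At such a $t_i$, two competing child branches at some branch point $b$ have equal maximal descendant-leaf value, realized by leaves $\ell_1,\ell_2$ with $f_{t_i}(\ell_1)=f_{t_i}(\ell_2)$. Resolving the tie one way makes $\ell_1$ the winner that passes $b$ and assigns $\ell_2$ a birth at $b$, while resolving it the other way exchanges the two roles; since the deaths $f_{t_i}(\ell_1)$ and $f_{t_i}(\ell_2)$ agree at $t_i$, the two resulting multisets of points are identical, so $D_{t_i}$ is well defined and $D_t$ extends continuously from both sides. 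I would then verify that this argument is robust to degeneracies — several ties occurring at a single $t_i$, or a pair of leaves remaining tied on a whole subinterval (handled by a fixed tie-breaking rule, which does not alter the multiset) — thereby completing the reduction.
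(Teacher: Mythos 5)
Your proof is correct, but it takes a genuinely different route from the paper's. The paper argues by induction on the depth of $T$: it decomposes $\TMDg(T,f)$ over the subtrees rooted at the children of the root and recursively constructs a matching in which every off-diagonal point moves at most $\epsilon$ in sup norm, with an explicit case analysis to repair the matching when the maximizing leaves $b_i$ and $b_i'$ of a subtree differ; the bound $|L(T)|\epsilon$ then follows because each diagram has exactly $|L(T)|$ off-diagonal points. You instead make explicit the combinatorial description of the barcode --- each leaf $\ell$ contributes the bar $(f(\beta(\ell)),f(\ell))$ with $\beta$ determined only by the order of $f$ on the leaves --- and handle changes of combinatorial type by linear interpolation and telescoping, the classical interpolation trick from persistence stability proofs. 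Your route avoids the paper's case analysis entirely, and as a byproduct the composite matching still moves each individual point by at most $\epsilon$, so it recovers the bottleneck bound of Kanari et al.\ as well. The one ingredient you rely on that the paper's induction does not need is the independence of the TMD from the tie-breaking choices in step 2 of the algorithm, so that at a breakpoint $t_i$ the diagram computed with the left-limit combinatorial type agrees with the one computed with the right-limit type; you prove this for a single generic tie and correctly flag the degenerate cases as needing verification. Since this invariance is exactly the well-definedness of the TMD, which the paper itself assumes implicitly (the algorithm instructs one to pick \emph{one of} the maximizing children) and which is established in the cited literature, invoking it is legitimate, but a fully self-contained write-up should prove it --- for instance by the argmax-preservation argument you sketch, applied to one tied pair at a time.
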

\begin{proof}
If $T$ is a single vertex, then $T$ has one leaf, which is also its root, and the fact that
\begin{equation*}
    d_1(\TMDg(T,f),\TMDg(T,f'))\leq \epsilon
\end{equation*}
is clear.

We proceed with the remaining cases by induction on the depth of $T$, each time proving that every point not in $\diag$ in each persistence diagram can simultaneously be matched with another point not in $\diag$ that is $\epsilon$ away in sup norm, provided $T$ is of depth $d$. This will imply the desired result since the TMD algorithm produces exactly one interval for each leaf. We will assume inductively that the matching we get for trees of depth $d-1$ between functions $f$ and $f'$ sends a copy of $(f(r),L)$ to a point with first coordinate $f'(r)$, and whose inverse sends a copy of $(f'(r),L')$ to a point with first coordinate $f(r)$, where $L$ and $L'$ are the maximum values of $f$ and $f'$ on the leaves of $T$ respectively. This additional constraint can clearly be satisfied for the base case where $T$ has depth zero, and hence is a single vertex.

Suppose $T$ is a tree of depth $d$ with $f$ and $f'$ as in the statement of the theorem. Let $c_1,\ldots,c_n$ denote the children of the root of $T$. Each $c_i$ is the root of a subtree $T_i$ of $T$ whose vertices are $c_i$, its children, its children's children, and so on. Let $b_i$ be the leaf that takes the greatest value of $f$ subject to the constraint that $c_i$ is an ancestor of $b_i$. We define $b_i'$ similarly for $f'$. Thus,
\begin{equation*}
\begin{split}
    \TMDg(T,f) &= \diag \sqcup \{ (f(r), f(b_i)): 1 \leq i \leq n \} \sqcup \bigsqcup_{i = 1}^n D_i, \\
    \TMDg(T,f') &= \diag \sqcup \{ (f'(r),f'(b'_i)): 1 \leq i \leq n \} \sqcup \bigsqcup_{i = 1}^n D'_i,
\end{split}
\end{equation*}
where
\begin{equation*}
\begin{split}
    D_i &= \TMDg(T_i,f) - \{(f(c_i), f(b_i))\},\\
    D'_i &= \TMDg(T_i,f') - \{(f'(c_i), f'(b'_i))\}.
\end{split}
\end{equation*}
If we can show that there is a matching of $D_i$ and $D'_i$ that sends no point more than $\epsilon$ away with respect to the sup norm, then the matching that sends each $(f(r),f(b_i))$ to $(f'(r),f(b'_i))$ does the same for $\TMDg(T,f)$ and $\TMDg(T,f')$. Indeed, assuming without loss of generality that $f(b_i) \geq f'(b'_i)$,
\begin{equation*}
|f(b_i) - f'(b'_i)| = f(b_i) - \max_{l\in L(T_i)} f'(l) \leq f(b_i) - f'(b_i) \leq \epsilon   
\end{equation*}
Further, this matching and its inverse also satisfy our additional inductive hypothesis regarding points with left coordinate equal to $f(r)$ and $f'(r)$. Since there are exactly $|L(T)|$ points in both persistence diagrams excluding $\diag$ we will be done once we can find a desired matching between $D_i$ and $D'_i$.

By inductive hypothesis we already have such a matching between $\TMDg(T_i,f)$ and $\TMDg(T_i,f')$, and we will adjust this matching to produce a matching between $D_i$ and $D'_i$. If our given matching sends $(f(c_i),f(b_i))$ to $(f'(c_i), f'(b'_i))$ then we immediately get a desired matching between $D_i$ and $D'_i$.

Otherwise, let $\phi:\TMDg(T_i,f) \to \TMDg(T_i,f')$ be our matching and let $(x',y') = \phi((f(c_i), f(b_i)))$, $(x,y) = \phi^{-1}((f'(c_i), f'(b'_i)))$. Thus  we may define $\tilde\phi:D_i\to D'_i$ by sending $(x,y)$ to $(x',y')$ and every other point to its image via $\phi$. By our inductive hypothesis, $x = f(c_i)$ and $x' = f'(c_i)$. Thus $|x-x'| \leq \epsilon$. Assume without loss of generality that $f'(b'_i) \leq f(b_i)$. Now assume, \textit{with} loss of generality, that $f'(b'_i) \geq y$. Thus we have the equation.
\begin{equation}
    |y-y'| = |y-f'(b'_i)+f'(b'_i)-y'|.
    \label{eqn:diffs}
\end{equation}
Both differences on the right side of the above equation are bounded by $\epsilon$, the first due to the matching $\phi$, and the last since by definition of $b'_i$, $f'(b'_i) \geq y'$ and so
\begin{equation*}
    |f'(b'_i)-y'|  = f'(b'_i)-y' \leq f(b_i) - y' \leq \epsilon.
\end{equation*}
Notice also that the last difference in Equation \ref{eqn:diffs} is nonnegative. Meanwhile, $y-f'(b'_i)$ is nonpositive since $f'(b'_i) \geq y$. Thus, Equation \ref{eqn:diffs} shows that $|y-y'|$ is a difference of nonnegative numbers each bounded by $\epsilon$, and hence itself is less than or equal to $\epsilon$. Hence $\tilde\phi$ is a matching which alters each coordinate of every point by no more than $\epsilon$.

Finally, still assuming that $f'(b'_i) \leq f(b_i)$, we also assume $f'(b'_i) \leq y$. Now the relevant equation is
\begin{equation}
    |y-y'| = |y-f(b_i)+f(b_i)-y'|.
    \label{eqn:diffs2}
\end{equation}
Once again, both differences on the right are bounded by $\epsilon$, the second due to the matching $\phi$, and the first since by definition of $b_i$, $y\leq f(b_i)$, and so
\begin{equation*}
    |y - f(b_i)| = f(b_i) - y \leq f(b_i) - f'(b'_i) \leq \epsilon. 
\end{equation*}
As before, we have also just shown the first difference in the right side of Equation \ref{eqn:diffs2} is nonpositive, while the second is nonnegative since $f(b_i) \geq f'(b'_i) \geq y'$. Hence $|y-y'|$ is the difference of two nonnegative numbers bounded by $\epsilon$, and so is itself less than $\epsilon$. Again, this implies that $\tilde\phi$ is a matching which sends no point more than $\epsilon$ away with respect to the sup norm and the proof is complete.
\end{proof}

\begin{remark}
Kanari et al. \cite[SI]{kanari2018topological} show perturbations of type \textbf{P1} are bottleneck stable by constructing a matching that sends points not in $\diag$ to each other. Proposition \ref{prop:fpurt} follows from their matching and the fact that the TMD algorithm produces a diagram with $|L(T)|$ points excluding $\diag$. The construction of the matching in \cite{kanari2018topological} implicitly assumes that $b_i = b'_i$. Here, we use a different inductive hypothesis and therefore include our own full construction for completeness.
\end{remark}

In fact, for our above proposition to hold, the values of $f$ and $f'$ only need to be close on the root, branch points, and leaves- the following proposition shows we can remove all other vertices without changing TMDg.

\begin{proposition}
Let $T$ be a tree with root $r$ and an edge $(a,b)$. Let $T'$ be the tree given by $N(T') = N(T) \cup \{v'\}$ and $E(T') = E(T) \cup \{(a,v'),(v',b)\} - \{(a,b)\}$, with root $r$. Let $f:N(T) \to \mathbb{R}$ and $f':N(T') \to \mathbb{R}$ be such that $f'$ restricts to $f$ on $N(T)$. Then
\begin{equation*}
    \TMDg(T,f) = \TMDg(T',f').
\end{equation*}
\label{prop:intnodes}
\end{proposition}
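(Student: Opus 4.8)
The plan is to argue that the TMD algorithm depends only on the set of branch points, the set of leaves, the restriction of $f$ to these vertices, and the ancestor--descendant relation among them, all of which are left unchanged by subdividing a single edge. Equivalently, I would run the TMD algorithm on $(T,f)$ and on $(T',f')$ in lockstep and check that they make identical choices and emit identical intervals. To begin, orient the subdivided edge: exactly one of $a,b$ is the parent, so after relabelling if necessary I may assume $a$ is the parent of $b$, whence in $T'$ the new vertex $v'$ is the child of $a$ and the parent of $b$. The first observation is that $v'$ has degree two and is not the root, so it is neither a leaf nor a branch point of $T'$; consequently $f'(v')$ is never assigned to the endpoint of any interval and plays no role whatsoever.

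Next I would record the combinatorial invariance. Subdivision does not change the degree of any original vertex: $a$ trades its edge to $b$ for an edge to $v'$, $b$ trades its edge to $a$ for an edge to $v'$, and every other vertex is untouched, while the root is still $r$. Hence the inclusion $N(T)\hookrightarrow N(T')$ restricts to a bijection between the branch points of $T$ and those of $T'$ and between $L(T)$ and $L(T')$, and it preserves function values since $f'=f$ on $N(T)$. Moreover, for every original vertex $w$ the set of its descendant leaves is the same whether computed in $T$ or in $T'$, because inserting a degree-two vertex along $(a,b)$ alters no ancestor--descendant relation among the vertices of $N(T)$. In particular the quantity maximized in step~2 of the TMD algorithm, namely the maximum of $f$ over the leaves descendant from a given branch point taken over each child branch, is identical in the two trees.

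With these invariances in hand I would run the two computations in parallel, processing branch points in the same order (the output of the algorithm is independent of this order). At each branch point the child whose child branch maximizes $f$ on the descendant leaves is the same in both trees, so the surviving child and the detached child branches correspond; the intervals that are ultimately emitted have left endpoints given by $f$-values at branch points or the root and right endpoints given by $f$-values at leaves, and these agree term by term. Therefore $\TMD(T,f)$ and $\TMD(T',f')$ coincide as barcodes, and passing to diagrams yields $\TMDg(T,f)=\TMDg(T',f')$.

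The one place that requires care, and which I expect to be the main obstacle in a fully rigorous write-up, is the single child branch that actually contains $v'$: the child branch of $a$ rooted at $b$ in $T$ versus the one rooted at $v'$ in $T'$. I would check explicitly that these two child branches have exactly the same descendant leaves with the same $f$-values, so that the maximum in step~2 is unaffected, and that after any detachment the interior vertex $v'$ simply lies inside one of the remaining paths, contributing to neither a birth nor a death. Everything else is a direct transcription of the matched choices described above.
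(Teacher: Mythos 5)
Your argument is correct, but it takes a genuinely different route from the paper. The paper proves this proposition by induction on the depth of $T$, reusing the recursive decomposition $\TMDg(T,f) = \diag \sqcup \{(f(r),f(b_i))\} \sqcup \bigsqcup_i D_i$ established in the proof of Proposition \ref{prop:fpurt}: the subdivided edge either lies in a single subtree $T_j$ hanging off the root, in which case the inductive hypothesis gives $D_j = D'_j$ and the other summands are untouched, or it is incident to the root, which is handled as a separate explicit case. You instead give a direct, global argument: the output of the TMD algorithm is determined by the branch points, the leaves, the values of $f$ on them, and the ancestor--descendant relation among them, and subdivision of an edge preserves all of this data (the new degree-two vertex $v'$ being neither a leaf nor a branch point, hence never an interval endpoint). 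Your approach buys a cleaner conceptual statement --- it isolates exactly what the TMD depends on, and would handle, say, simultaneous subdivision of many edges or the inverse smoothing operation with no extra work --- at the cost of having to argue (as you note) that the lockstep runs make matching choices, in particular that the child branch of $a$ through $b$ in $T$ and through $v'$ in $T'$ have identical descendant-leaf sets, and implicitly that the algorithm's output is independent of the order in which branch points are processed. The paper's inductive route avoids reasoning about the algorithm's execution entirely by working with the already-established recursive formula for $\TMDg$, which is the template used for every other proof in the paper. Both proofs are sound; a full write-up of yours would need to make precise the characterization of the emitted intervals as $[f(b_l), f(l)]$ with $b_l$ the branch point (or root) at which the leaf $l$'s branch is finally detached, which is the content of your ``combinatorial invariance'' paragraph.
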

\begin{proof}
We prove this by induction on the depth of $T$. If $T$ has depth zero, then $T$ is a single vertex and the theorem is vacuously true since $T$ has no edge.

Otherwise, assume the result holds for adding vertices to trees of depth less than $d$ and $T$ is of depth $d$. Then, first assume neither $a$ nor $b$ is the root of $T$. Thus, adopting the notation of $c_i$, $b_i$, $T_i$, and $D_i$ from the previous proof, we observe that $a$ and $b$ must in some $T_j$. We define $T'_j$ by $N(T'_j) = N(T_j) \cup \{v'\}$ and $E(T'_j) = E(T_j) \cup \{(a,v'),(v',b)\} - \{(a,b)\}$. We further define $D'_j = \TMDg(T'_j,f') - \{(f(r), f(b_j))\}$. It follows that
\begin{equation*}
    \begin{split}
        \TMDg(T,f) &= \diag \sqcup \{ (f(r), f(b_i)): 1 \leq i \leq n \} \sqcup \bigsqcup_{i = 1}^n D_i, \\
    \TMDg(T',f') &= \diag \sqcup \{ (f(r), f(b_i)): 1 \leq i \leq n \} \sqcup D'_j\sqcup\bigsqcup_{i \neq j} D_i.
    \end{split}
\end{equation*}
But by our inductive hypothesis these $D_j$ and $D'_j$ are identical, so the above persistence diagrams must be identical as well.
It remains to check the case where either $a$ or $b$ is the root. Without loss of generality, suppose $a$ is the root of $T$. Then $b=c_j$ for some $j$. Let $T'_j$ be given by $N(T'_j) = N(T_j) \cup \{v'\}$ and $E(T'_j) = E(T_j) \cup \{(v',c_j)\}$. Thus
\begin{equation*}
    \TMDg(T',f') = \diag \sqcup \{ (f(r), f(b_i)): 1 \leq i \leq n \} \sqcup D'_j\sqcup\bigsqcup_{i \neq j} D_i,
\end{equation*}
where
\begin{equation*}
    D'_j := \TMDg(T'_j,f') - \{(f'(v'), f(b_i))\}= \TMDg(T_j,f) - \{(f(c_j), f(b_i))\}= D_j.
\end{equation*}
Hence $\TMDg(T,f) = \TMDg(T',f')$.
\end{proof}

The following proposition shows that statement 2 of Theorem \ref{thm:main} holds for perturbations of type \textbf{P2}. By applying the previous proposition as well, the statement also holds for perturbations of type \textbf{P3}.

\begin{proposition}
Let $T$ be a rooted tree with root $r$ and $f:N(T) \to \mathbb{R}$. Let $v$ be any vertex of $T$ and $T'$ be another tree given by $N(T') = N(T) \cup \{v'\}$ and $E(T') = E(T)\cup \{(v,v')\}$, with root $r$. Let $f':N(T') \to \mathbb{R}$ be such that $f'=f$ on $N(T)$ and $|f'(v')-f'(v)| \leq \epsilon$. If $f(v) \leq f(l) + \delta$ for at least one leaf $l$ which is a descendant of $v$ then
\begin{equation*}
    d_1(\TMDg(T,f),\TMDg(T',f')) \leq \epsilon + \frac{3}{2}\delta.
\end{equation*}
\label{prop:addbranch}
\end{proposition}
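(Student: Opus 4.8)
The plan is to prove the statement for perturbations of type \textbf{P2} directly; the case \textbf{P3} then follows by first invoking Proposition~\ref{prop:intnodes} to absorb the interior vertex created on the subdivided edge, which leaves a pure \textbf{P2} perturbation. Throughout I would write $M_v := \max\{f(l) : l \in L(T) \text{ a descendant of } v\}$ for the largest $f$-value among the leaves below $v$; the hypothesis gives $f(v) \le M_v + \delta$, and from $f'(v) = f(v)$ together with $|f'(v')-f'(v)|\le\epsilon$ we get $|f'(v')-f(v)|\le\epsilon$. Since attaching the single leaf $v'$ at $v$ creates exactly one new leaf, $\TMDg(T',f')$ has exactly one more off-diagonal point than $\TMDg(T,f)$, so the whole argument reduces to identifying which points move and by how much. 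I would split on whether the new leaf wins or loses the competition at $v$. In the \emph{loser} case $f'(v')\le M_v$, processing the branch point $v$ detaches $v'$ as a losing child, producing the point $(f(v),f'(v'))$, and leaves every other competition in the algorithm unchanged; hence $\TMDg(T',f') = \TMDg(T,f)\sqcup\{(f(v),f'(v'))\}$. Matching every old point to itself and sending $(f(v),f'(v'))$ to $\diag$ costs $\tfrac12|f(v)-f'(v')|\le\tfrac12\epsilon$, comfortably inside the bound.

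The substance is the \emph{winner} case $f'(v')>M_v$, where $v'$ becomes the highest leaf below $v$ and so takes over the distinguished path that the old maximal leaf $l^\ast$ (with $f(l^\ast)=M_v$) used to follow. Let $v=u_0,u_1,\dots,u_m=r$ be the path to the root and track the ``incumbent'' value carried upward out of the subtree at $v$. For $(T,f)$ this incumbent starts at $M_v$ and increases as successive side branches defeat it, at branch points $u_{a_1},\dots,u_{a_p}$, through values $M_v=V_0<V_1<\cdots<V_p$, producing incumbent points $(f(u_{a_1}),V_0),(f(u_{a_2}),V_1),\dots,(f(r),V_p)$. For $(T',f')$ the incumbent instead starts at $f'(v')>M_v$, so it survives past every side value below $f'(v')$ and the two processes re-synchronize at the first upgrade point whose side value exceeds $f'(v')$. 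The key structural claim I would establish is that, apart from this chain of incumbent points, the two diagrams coincide: every other side point and every point internal to a subtree is produced identically, because making $v'$ the winner at $v$ demotes only $l^\ast$ and merely shifts the incumbent values along the path. Relative to $(T,f)$, then, $(T',f')$ has its incumbent deaths shifted up one step along the path, carries one extra point for $v'$ at the re-synchronization point, and carries the demoted point $(f(v),M_v)$ coming from $l^\ast$. (If $v$ is itself a leaf the demoted point is absent and the same computation below yields the stronger bound $\epsilon$.)

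I would then build the matching that pairs each incumbent point of $(T,f)$ with the incumbent point of $(T',f')$ having the same birth but the next-higher death, pairs the top incumbent point of $(T,f)$ with $v'$'s point, and sends the demoted point $(f(v),M_v)$ to $\diag$. Because consecutive paired points share a birth value, the cost reduces to a telescoping sum of successive death gaps and collapses to $(f'(v')-M_v)+\tfrac12|f(v)-M_v|$. Writing $y:=f(v)-M_v\le\delta$ and using $f'(v')\le f(v)+\epsilon$, this total is at most $y+\epsilon+\tfrac12|y|$, which is $\le\epsilon+\tfrac32\delta$ when $y\ge0$ and $\le\epsilon$ when $y<0$; the extreme $y=\delta$ gives exactly $\epsilon+\tfrac32\delta$, proving the proposition.

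The hard part is the bookkeeping of the winner case: showing rigorously that promoting $v'$ disturbs \emph{only} the incumbent chain running from $v$ to the root, so that all remaining points match at zero cost, and that the birth-aligned pairing makes the death discrepancies telescope rather than accumulate. I expect the cleanest way to make this airtight is an induction on the depth of $T$ in the style of Proposition~\ref{prop:fpurt}, carrying the additional inductive hypothesis (as there) that pins down the image of the representative point $(f(r),\cdot)$ at each level; the telescoping across levels is then precisely the cancellation of these per-level discrepancies, and the final arithmetic above packages it into the stated constant.
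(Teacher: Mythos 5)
Your winner-case analysis is sound and genuinely different in style from the paper's proof (which inducts on the depth of $v$ and threads an extra hypothesis about where the point $(f(r),\cdot)$ is sent); your incumbent-chain telescoping computes the same cancellation globally and arrives at the correct sharp constant $\epsilon+\tfrac32\delta$. But there is a genuine gap in your \emph{loser} case. The claim that $v'$ ``loses the competition at $v$'' and hence $\TMDg(T',f') = \TMDg(T,f)\sqcup\{(f(v),f'(v'))\}$ presupposes that $v$ is a branch point of $T'$, i.e.\ that $v$ already had a child in $T$ (or is the root with a child). If $v$ is a non-root leaf of $T$, then in $T'$ the vertex $v$ has degree two, there is no competition at $v$, and attaching $v'$ simply \emph{extends} the leaf: the interval that used to die at value $f(v)=M_v$ now dies at $f'(v')$, and since $f'(v')$ may be \emph{smaller} than $f(v)$, the maximal value carried out of that branch drops, which can flip winners at branch points arbitrarily far above $v$. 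Concretely, take $r\to u\to\{v,w\}$ with $f\equiv 0$ on $r,u$, $f(v)=10$, $f(w)=9.5$, and attach $v'$ to $v$ with $f'(v')=9$: then $\TMDg(T,f)$ has off-diagonal points $\{(0,10),(0,9.5)\}$ while $\TMDg(T',f')$ has $\{(0,9.5),(0,9)\}$ --- not your claimed multiset, and the point $(0,10)$ has disappeared entirely. The distance is still within the bound, but your argument does not establish it; you need a downward (incumbent-decreasing) version of the same telescoping chain for this subcase, which your case split as written excludes (your winner case requires $f'(v')>M_v$).

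Two smaller points. First, you never address ties ($f'(v')=M_v$, or equal leaf values generally), where the TMD algorithm's choice of winner is ambiguous; the paper disposes of this by first proving the injective case and then perturbing via Proposition~\ref{prop:fpurt}, and you would need something similar for your case split on $f'(v')$ versus $M_v$ to be well defined. Second, the ``key structural claim'' that only the incumbent chain is disturbed is exactly where the real work lies, and you correctly identify that an induction in the style of Proposition~\ref{prop:fpurt} would make it rigorous --- that formalization is essentially the paper's proof, so once written out your argument and theirs would largely converge; as it stands, though, the winner case is a (correct) sketch and the loser case contains an actual error.
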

\begin{proof}
To avoid certain mild irritations during the proof, we will first prove the proposition when $f'$ (and hence also $f$) is injective. We will then attain the general case by perturbing $f'$ via Proposition~\ref{prop:fpurt}.

For convenience, let $\eta = \epsilon + \frac{3}{2}\delta$. In this proof we proceed by induction on the depth $d$ of $v$ instead of the depth of $T$. As our inductive hypothesis, we assume that when $v$ has depth less than $d$ in a tree $T$, and $T'$ satisfies the hypothesis of the proposition the following is true. If $l$ maximizes $f$ on the leaves of $T$ and $l'$ maximizes $f'$ on the leaves of $T'$, then we can find a $\eta$-matching sending $(f(r), f'(l'))$ to $(f(r),f(l))$. Hence for our induction step we will need to construct a matching with this additional property.
 
For our base case $v = r$. As in the previous proofs, we may write
\begin{equation*}
    \TMDg(T,f) = \diag \sqcup \{ (f(r), f(b_i)): 1 \leq i \leq n \} \sqcup \bigsqcup_{i = 1}^n D_i
\end{equation*}
with $D_i$, $T_i$, $b_i$, and $c_i$ defined as in the proof of Proposition \ref{prop:fpurt}. If we attach the vertex $v'$ to $r$, then we have
\begin{equation*}
        \TMDg(T',f') = \diag\sqcup \{(f(r), f'(v'))\} \sqcup \{ (f(r), f(b_i)): 1 \leq i \leq n \} \sqcup \bigsqcup_{i = 1}^n D_i
\end{equation*}
Hence by sending $(f(r), f'(v'))$ to $(f(r),f(r))$ on the diagonal, and everything else to itself we obtain a $\eta$-matching between $\TMDg(T',f')$ and $\TMDg(T,f)$. However, if $v'$ maximizes $f'$ on the leaves of $T'$ and the depth of $T$ is nonzero then this matching does not satisfy the inductive hypothesis. Whenever this happens, we can instead choose the $\eta$-matching with the following mappings
\begin{equation*}
    \begin{split}
        (f(r), f'(v'))&\longmapsto (f(r), f(b_k))\\
        (f(r), f(b_k))&\longmapsto ((f(r)+f(b_k))/2,(f(r)+f(b_k))/2)\\
    \end{split}
\end{equation*} that sends everything else to itself, where $b_k$ is the leaf maximizing $f$ on the leaves of $T$. Notice that $|f'(v')-f(r)|\leq \epsilon$, and further $f(r)-f(b_k) = f(v)-f(b_k) \leq \delta$, so if $f(b_k) < f(r)$ then this is a $\eta$-matching. Otherwise, $f(r) < f(b_k) < f(v')$, and
\begin{equation*}
    \begin{split}
        |f'(v')-f(b_k)| + |f(b_k) - (f(r)+f(b_k))/2| &\leq  |f'(v')-f(b_k)| + |f(b_k) - f(r)|\\
        &= f'(v')-f(b_k) + f(b_k) - f(r)\\
        &= f'(v') - f(r) \\ 
        &\leq \epsilon,
    \end{split}
\end{equation*}
so the proposed matching is an $\epsilon$-matching, and hence a $\eta$-matching.

Proceeding now by induction, we now assume the inductive hypothesis holds whenever the $v$ has depth less than $d>0$. Assume now that the depth of $v$ is $d$. Hence $v \neq r$, so then $v$ lies in $T_j$ for some $j$. We define $T'_j$ to be the tree with $N(T'_j) = N(T_j) \cup \{v'\}$ and  $E(T'_j) = E(T_j)\cup (v,v')$, and $b'_j$ to be the leaf in $T'_j$ that maximizes $f'$ over the leaves of $T'_j$. Thus
\begin{equation*}
        \TMDg(T',f') = \diag\sqcup \{ (f(r),f(b_i)): i\neq j \}\sqcup \{(f(r), f'(b'_j))\} \sqcup D'_j \sqcup  \bigsqcup_{i \neq j} D_i,
\end{equation*}
where
\begin{equation*}
    D'_j = \TMDg(T'_j,f') - \{(f(c_i), f'(b'_i))\}
\end{equation*}
For our convenience, we let $\gamma$ denote the number $\eta - f'(b'_j) + f(b_j)$. Notice that $v$ has depth $d-1$ in $T_j$. Consider the $\eta$-matching $\phi$ between $\TMDg(T'_j,f')$ and $\TMDg(T_j,f)$ guaranteed to exist by our inductive hypothesis. Since our inductive hypothesis also demands that $\phi((f(c_j), f(b'_j))) = (f(c_j), f(b_j))$, $\phi$ restricts to a $\gamma$-matching between $D'_j$ and $D_j$.

We can use this matching to construct a $\eta$-matching between $\TMDg(T',f')$ to $\TMDg(T,f)$ by sending $(f(r), f'(b'_j))$ to $(f(r), f(b_j))$, $D'_j$ to $D_j$ via $\phi$, and everything else to itself. However, this matching may not satisfy our additional inductive assumption that points with right coordinates corresponding to leaves maximizing $f$ are sent to each other. If this is the case, then either $b'_j$ is the leaf maximizing $f'$ or $b_j$ is the leaf maximizing $f$, but not both. Since $f'(b'_j) \geq f(b_j)$, it must be the case that $b'_j$ maximizes $f'$, but $b_j$ does not maximize $f$. Let $b_k$ be the leaf that maximizes $f$. In this case, consider the matching $\psi$:
\begin{equation*}
    \begin{split}
        (f(r), f(b'_j))&\longmapsto (f(r), f(b_k)) \\
        (f(r), f(b_k))&\longmapsto (f(r), f(b_j)) \\
        D'_j &\longmapsto D_j,
    \end{split}
\end{equation*}
that sends everything else to itself. Of course, we define the restriction of $\psi$ to $D'_j$ to be $\phi$. Since $f'(b'_j) \geq f(b_k) \geq f(b_j)$, we observe
\begin{equation*}
    \gamma + |f'(b'_j) - f(b_k)| + |f(b_k) - f(b_j)| = \gamma + f'(b'_j) - f(b_j) = \eta.
\end{equation*}
As $\phi$ is a $\gamma$-matching, the fact that $\psi$ is a $\eta$-matching satisfying our inductive hypothesis is immediate. 

Now suppose $f'$ on $T'$ is not injective. In this case, we can always find an injective $g'$ on $N(T')$ satisfying both that $\|f'-g'\|_\infty \leq \kappa$ for any fixed positive real $\kappa$ and the hypotheses of the proposition with $\epsilon$ replaced by $\epsilon + 2\kappa$ and $\delta$ replaced by $\delta + 2\kappa$. In fact, this will always be the case whenever we perturb $f'$ at each vertex by a value no more than $\kappa$ in such a way that the perturbed function is injective. We let $g$ be the restriction of $g'$ to $N(T)$. Hence by Proposition \ref{prop:fpurt} and the fact that $g'$ is injective,
\begin{equation*}
\begin{split}
    d_1(\TMDg(T,f),\TMDg(T',f')) &\leq d_1(\TMDg(T,f),\TMDg(T,g))\\
    &+ d_1(\TMDg(T,g),\TMDg(T',g'))\\
    &+ d_1(\TMDg(T',g'),\TMDg(T',f')),
\end{split}
\end{equation*}
and this is less than or equal to
\begin{equation*}
    (|L(T)| + |L(T')|)\kappa + 5\kappa + \eta .
\end{equation*}
Considering a sequence of values for $\kappa$ approaching zero the result is immediate.
\end{proof}

The bound given by this proposition is sharp. We show this with an example in Figure \ref{fig:sharpbound}. Indeed the tree on the right of this figure is produced by a perturbation of the tree on the left of the same type as in the hypothesis of the proposition. The tree on the left has TMDg equal to $(0,K-\delta) \sqcup \diag$ while the tree on the right has TMDg equal to $(0,K+\epsilon)\sqcup(K, K-\delta)\sqcup \diag$. It is not difficult to see that for sufficiently large $K$, the optimal matching from the later diagram to the former is the matching which sends $(0, K+\epsilon)$ to $(0, K-\delta)$, $(K, K-\delta)$ to the point $(K-\delta/2,K-\delta/2)$ on the diagonal, and everything else to itself.

\begin{figure}[htbp]
\centering
\resizebox{.5\textwidth}{!}{
\includegraphics{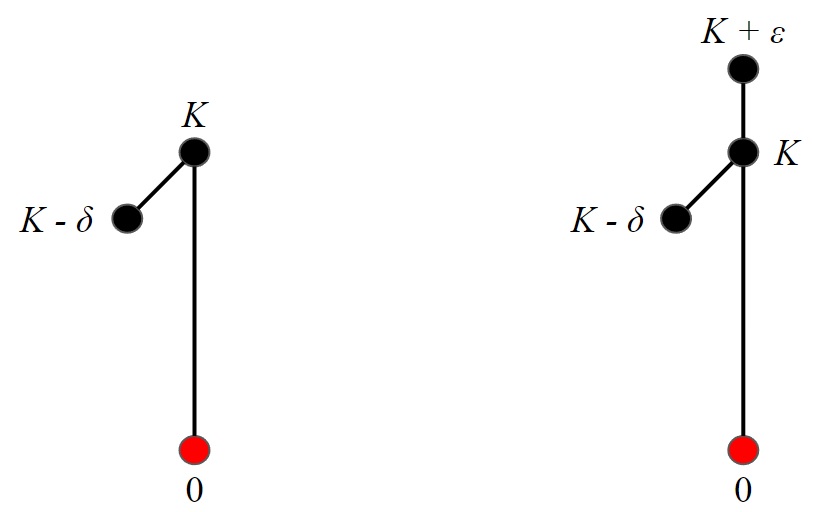}
}

    \caption{A pair of trees with roots in red showing that the bound in Proposition \ref{prop:addbranch} is sharp.}
    \label{fig:sharpbound}
\end{figure}

\section{Acknowledgements}
AG is grateful for the support by the Engineering and Physical Sciences Research Council of Great Britain under research grants EP/R020205/1. HAH gratefully acknowledges EPSRC EP/R005125/1 and EP/T001968/1, the Royal Society RGF$\backslash$EA$\backslash$201074 and UF150238. DB and HAH are members of the Centre for Topological Data Analysis, funded in part by EPSRC EP/R018472/1, and grateful for helpful discussions with centre members. For the purpose of Open Access, the authors have applied a CC BY public copyright licence to any Author Accepted Manuscript (AAM) version arising from this submission.

\bibliographystyle{abbrv}
\bibliography{refs}

\end{document}